\let\csname equation*\endcsname\relax
\let\csname endequation*\endcsname\relax
\newtheorem{den}{Definition}
\newtheorem{axm}{Axioms}
\newtheorem{prn}{Proposition}
\newtheorem{lea}{Lemma}
\newtheorem{coy}{Corollary}
\newtheorem{rek}{Remark}
\def\T{\mathcal T}
\def\B{\mathcal B}
\def\C{\mathcal C}
\def\D{\mathcal D}
\def\E{\varepsilon}
\def\H{\mathcal H}
\def\OO{\mathcal O}
\def\J{\mathcal J}
\def\L{\mathcal L}
\def\U{\mathcal U}
\def\pd{\phantom{\dagger}}
\begin{document}
\title[The Yang-Baxter  paradox ]{The Yang-Baxter paradox}
\author{Jon Links}
\address{School of Mathematics and Physics, The University of Queensland, Brisbane, QLD 4072, Australia}
\begin{abstract}
Consider the statement ``Every Yang-Baxter integrable system is defined to be exactly-solvable''. To formalise this statement, definitions and axioms are introduced. Then, using a specific Yang-Baxter integrable bosonic system, it is shown that a paradox emerges. A generalisation for completely integrable bosonic systems is also developed.  
\end{abstract}

\section{Introduction}

Integrable quantum systems  provide many important insights into, and benchmarks for, many important physical phenomena, e.g. see \cite{bf}. These accomplishments in the application of integrable models predominantly stem from formulation  through the Yang-Baxter equation \cite{yang,ba1,ba2}, and associated exact solutions through Bethe Ansatz techniques \cite{bethe}.  Despite this success, there is no consensus on an adequate definition for what constitutes integrability in quantum systems \cite{cm,erbe,l}.  One of the difficulties in this regard relates to the notion of a 
``maximal'' set of mutually commuting operators. This is because every self-adjoint operator acting on a finite-dimensional vector space of dimension $\D$ commutes with $\D$ idempotent operators that project onto the elements of a basis of eigenstates. Likewise, the notion of an exact solution is imprecise. While the Bethe Ansatz is universally recognised as a extremely powerful technique, it is applied in many guises besides the original co-ordinate formulation of Bethe \cite{bethe}. These include algebraic \cite{tf}, analytic \cite{resh}, functional \cite{holger}, thermodynamic \cite{taka}, off-diagonal \cite{cysw},  double-row transfer matrix constructions \cite{sklyanin}, and by using separation of variables \cite{s}.       
Moreover there are other techniques available to yield exact solutions, such as  the Jordan-Wigner transform \cite{lsm}, those used in Kitaev-type models \cite{kitaev,yu}, and those used in Rabi-type models \cite{braak,gldb,moroz,yao}. These somewhat confuse attempts to provide an unambiguous definition for what constitutes exact-solvability, and to identify its relationship to integrability. The (fermionic) Hubbard model serves as a classic example of how the defiant the challenges can be. Using the co-ordinate Bethe Ansatz, an exact solution was derived in 1968 \cite{lw}. It took until 1986 for the first steps towards establishing Yang-Baxter integrability were completed, with the construction of a transfer matrix \cite{shastry}. A clear understanding of the solution for the Yang-Baxter equation associated with the transfer matrix was delivered in 1995 \cite{usw}. Then, successful application of the algebraic Bethe Ansatz was achieved in 1998 \cite{mr}. This was followed, in 2001, by the construction of the ladder operator to generate the conserved operators \cite{lzmg}.     
By contrast, a similar program for the Rabi model has so far not yielded any signs of Yang-Baxter integrability \cite{bz}.  

In the other direction, the title of Rodney Baxter's famous book \cite{bax} reminds that there is also a distinction to be made between what is {\it exactly-solved}, and what is not yet exactly-solved but possibly {\it exactly-solvable}.  It appears plausible
to suggest that {\it all} Yang-Baxter integrable systems are, in principle, exactly-solvable.
Some solutions are known. Others are unknown, and these serve as open problems for future research. 
However, the
results presented below counter this perspective. 
For classical integrable systems that are generally analysed by quadratures, it has been remarked that ``integrability in classical mechanics does not require the solvability of the quadratures'' \cite{erbe}.    
Likewise, it is argued here that Yang-Baxter integrability in quantum systems does not require exact-solvability. Without recourse to a concrete definition for what constitutes an exact solution, the result is instead deduced through inconsistencies between certain axioms and conclusions.

In Sect. 2 the mathematical framework and notational conventions are established. This includes a brief discussion on the context of the problem via the one-dimensional Bose-Hubbard model, widely considered to not be exactly-solvable. In Sect. 3, a specific Yang-Baxter integrable system is introduced.  Then, in Sect. 4, the main results are derived. This entails two instances where a set of axioms is prescribed to characterise exact-solvability. In both cases, the example from Sect. 3 is used to  show that defining solvability through integrability is inconsistent with the axioms. Concluding discussion is provided in Sect. 5.

\section{Preliminaries}

All Hamiltonians below are formulated through sets of  canonical boson operators $\B = \{b^{\pd}_l,\,b_l^\dagger: l=1,...,\L<\infty \}$ satisfying the commutation relations
\begin{eqnarray*}
&[ b_k^{\pd},\,b_l^\dagger]=\delta_{kl} I, \\
&[ b_k^{\pd},\,b^{\phantom{\dagger}}_l]=[b_k^\dagger,\,b_l^\dagger]=0,
\end{eqnarray*}
where $I$ denotes the identity operator. Throughout, the field is ${\mathbb C}$.
\begin{den} \label{bosham}
A self-adjoint operator $\H$ expressed with non-trivial, polynomial-dependence in the operators
from the universal enveloping algebra of  $\B=\{b^{\pd}_l,\,b_l^\dagger: l=1,...,\L<\infty\}$ is said to be a  {\rm bosonic Hamiltonian}. If $\H$ cannot be expressed in terms of the universal enveloping algebra of a proper subset of $\B,$ then $\H$ is said to have $\L$ {\rm degrees of freedom}.
\end{den}

\begin{den} \label{conserved}
A linear operator $K$ is said to be a conserved operator of a bosonic Hamiltonian $\H$ if $K$ has non-trivial, polynomial-dependence on some of the operators from the universal enveloping algebra of $\B=\{b_l^{\pd},\,b_l^\dagger: l=1,...,\L<\infty \}$ and satisfies 
\begin{eqnarray*}
[\H,\,K]=0. 
\end{eqnarray*} 
\end{den}

\begin{rek}
By defining conserved operators in terms of {\rm polynomial} functions, Definition \ref{conserved} excludes the possibility of identifying projection operators as conserved operators. For example, the projection $P_0$ onto the vacuum $|0\rangle$ is formally given by
\begin{eqnarray*}
P_0=|0\rangle\langle 0|= \prod_{k=1}^\infty \prod_{l=1}^{\L}
(I-k^{-1}b_l^\dagger b^{\phantom{\dagger}}_l)
\end{eqnarray*}
and is consequently not considered to be a conserved operator under Definition \ref{conserved}.
\end{rek}

\begin{den} \label{tpn}
The total number operator $N$ is defined as 
\begin{eqnarray*}
N=\sum_{l=1}^{\L} b_l^\dagger b^{\pd}_l. 
\end{eqnarray*}
If a bosonic Hamiltonian $\H$ satisfies 
\begin{eqnarray*}
[\H,\,N]=0,
\end{eqnarray*}
then $\H$ is said to {\rm conserve total particle number}.

\end{den}

\begin{rek}
A bosonic Hamiltonian $\H$ acts on an infinite-dimensional Fock space. If $\H$ conserves total particle number, then $\H$ 
admits a direct sum decomposition 
\begin{eqnarray}
\H = \bigoplus_{n=0}^{\infty}  \H(n)
\label{decomp}
\end{eqnarray}
where $\H(n)$ acts on a  space $V(n)$ of finite dimension 
\begin{eqnarray*}
{\rm dim}(V(n))=\frac{(n+\L-1)!}{ n!  (\L-1)!},
\end{eqnarray*}
and $N$ acts on $V(n)$ as an $n$-fold multiple of the identity operator.
\end{rek}
 \begin{den}
A set of linear operators $\{\OO_1,\,...,\OO_q\}$ is said to be {\rm functionally-dependent} if there exists a non-constant polynomial $f$ such that  
$f(\OO_1,...,\OO_q)=0$. If a set of linear operators is not functionally-dependent, then it is said to be {\rm functionally-independent}.  
\end{den}

 \begin{den} \label{ci}
A bosonic Hamiltonian $\H$ with $\L$ degrees of freedom is said to be {\rm completely integrable} if there exists a functionally-independent set of conserved operators 
\begin{eqnarray*}
\{   \H=K_1,\,K_2,\,K_3,...,K_{\L}\}
\end{eqnarray*}
satisfying
\begin{eqnarray*}
[K_j,\,K_l]=0, \qquad \forall \,1\leq j,\,l\leq \L .
\end{eqnarray*} 
\end{den}

\begin{rek}
The definition above does not make any reference to dimensionality, nor to locality properties of the conserved operators. Such locality properties are important in the calculation of some physical aspects, such as thermodynamics in one-dimensional systems  e.g. \cite{ef}. But they are not relevant here. Moreover, it bears mentioning that not all applications of Yang-Baxter integrable systems refer to one-dimensional lattice models. Known examples that do not fall into that category include Bloch electrons \cite{wz}, $q$-oscillators 
\cite{sergeev}, central spin models \cite{g,ng,s}, and superconductor pairing models \cite{cdv,dilsz,lil,skrypnyk09,vp,zlmg}.  
 \end{rek}

\begin{den} \label{ybint}
A bosonic Hamiltonian $\H$ is said to be {\rm Yang-Baxter integrable} if there exists an operator $t(u),\,u\in \,{\mathbb C}$, known as a {\rm transfer matrix}, that has non-trivial, polynomial-dependence on some of the operators from the universal enveloping algebra of $\B=\{b_l^{\pd},\,b_l^\dagger: l=1,...,\L<\infty\}$ and satisfies 
\begin{eqnarray*}
&[\H,\,t(u)]=0, \\
&[t(u), \, t(v)]=0, \qquad \forall \, u,\, v \in\, {\mathbb C}.
\end{eqnarray*}  
\end{den}

\begin{rek}
Historically, a transfer matrix draws its name from models in lattice statistical mechanics. The above definition is, admittedly, vague in that it does not refer to a Yang-Baxter equation nor its relation to the transfer matrix. This is deliberate because there are numerous approaches that can be employed to construct commuting transfer matrices beyond the most familiar cases with periodic boundary conditions. These include, but are not limited to, the use of classical Yang-Baxter equations \cite{bd,jurco}, dynamical Yang-Baxter equation \cite{ev}, star-triangle relations \cite{helen}, tetrahedron equation \cite{bs}, and through quantum determinants and Dunkl operators \cite{bghp}. They may incorporate twisted boundary conditions \cite{nw}, open boundary conditions \cite{sklyanin}, braided boundary conditions \cite{lf}, or infinite chains \cite{omar}.  However, these variants are not pertinent below because only one representative of Yang-Baxter integrability will be required to develop the arguments. 
\end{rek}

\subsection{The one-dimensional Bose-Hubbard model}
Next, to provide some background context, a brief discussion is given to the one-dimensional Bose-Hubbard model. Much of the detail is adapted from \cite{ol}. 

The system satisfies Definition \ref{bosham}, with the explicit Hamiltonian
\begin{eqnarray}
H_{BH}= -t (b^\dagger_1 b^{\pd}_{\L} + b^\dagger_{\L} b^{\pd}_1)   -t\sum_{j=1}^{\L-1} (b^\dagger_j b^{\pd}_{j+1} + b^\dagger_{j+1} b^{\pd}_j) 
+\U\sum_{j=1}^{\L} b^\dagger_j b^\dagger_j b^{\pd}_j b^{\pd}_j
\label{bh}
\end{eqnarray} 
from which it can be verified that Definition \ref{tpn} is met such that Eq. (\ref{decomp}) is satisfied. It follows that for $\L=2$, the so-called Bose-Hubbard {\it dimer} model, the system is completely integrable according to Definition \ref{ci}, with the set of conserved operators $\{H_{BH},\, N\}$. For this case it is known that the system is Yang-Baxter integrable via a transfer matrix associated with either the classical Yang-Baxter equation \cite{eks}
or  the quantum Yang-Baxter equation \cite{esks,esse}, and is exactly-solvable. See also \cite{lh}. The Hamiltonian (\ref{bh}) is again exactly-solvable in the limit $\L\rightarrow\infty$ \cite{ll,jim}. For other values of $\L$ with $n=1$ the system can be solved by discrete Fourier transform, and for $n=2$ by using centre-of-mass coordinates. But attempts to obtain solutions for higher values of $n$ have not succeeded \cite{ch}. The expectation that the model is not generally solvable is consistent with the characterisation of chaotic behaviour found when $\L=3$ \cite{fp}.

Suppose it is accepted that (\ref{bh}) is not exactly-solvable. Then any definition for that characterisation needs to be able to identify, for some $\L$ and $n$, that the action of (\ref{bh}) on $V(n)$ is not exactly-solvable. 
That is to say, for bosonic Hamiltonians that conserve total particle number it is meaningful to refer to not exactly-solvable linear operators acting on finite-dimensional spaces, precisely those spaces within the decomposition (\ref{decomp}).          
If this was not the case, and each linear operator obtained by restricting (\ref{bh}) to $V(n)$ was exactly-solvable, then the  exact solution on the entire space would simply be obtained from the union of the exact solutions for each subspace.

\section{A class of bosonic Yang-Baxter  integrable systems}

In this section a construction is described, via an explicit transfer matrix, for a class of bosonic Hamiltonians that are Yang-Baxter integrable. For notational convenience, the following sets of canonical boson operators are introduced, indexed by multiple labels:
\begin{eqnarray}
\B_j= \{   a^{\pd}_{(j,\mu)},\,a_{(j,\mu)}^\dagger, \, b^{\pd}_{(j,\mu)},\,b_{(j,\mu)}^\dagger: \, \mu=1,...,m_j  \}, \nonumber \\
\B=\bigcup_{j=1}^L \B_j \label{bee}
\end{eqnarray}
whereby 
\begin{eqnarray}
2\sum_{j=1}^{L} m_j=\L. 
\label{sum}
\end{eqnarray}
Set the following notations for number operators:
\begin{eqnarray*}
N_j=\sum_{\mu=1}^{m_j}  \left( a_{(j,\mu)}^{\dagger} a^{\pd}_{(j,\mu)} + b_{(j,\mu)}^{\dagger} b^{\pd}_{(j,\mu)} \right) , \\  
N_a= \sum_{j=1}^L    \sum_{\mu=1}^{m_j}  a_{(j,\mu)}^{\dagger} a^{\pd}_{(j,\mu)} ,  \\ 
N_b= \sum_{j=1}^L    \sum_{\mu=1}^{m_j}  b_{(j,\mu)}^{\dagger} b^{\pd}_{(j,\mu)} ,  \\ 
N=N_a+N_b= \sum_{j=1}^L N_j .
\end{eqnarray*}
Let $\{ \E_j: \,j=1,...,L\}$ denote a set of arbitrary, pairwise distinct, real parameters.
\begin{prn} \label{ybi}
The bosonic Hamiltonian 
\begin{eqnarray}
H=U(N_a-N_b)^2 + \sum_{j=1}^L \E_j   \sum_{\mu=1}^{m_j}  \left(a_{(j,\mu)}^{\dagger} b^{\pd}_{(j,\mu)} + b_{(j,\mu)}^{\dagger} a^{\pd}_{(j,\mu)} \right) ,
\qquad U\in\,{\mathbb R},
\label{ham}
\end{eqnarray}
is Yang-Baxter integrable.
\end{prn}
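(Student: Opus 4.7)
The goal is to exhibit an operator $t(u)$ satisfying the commutativity conditions in Definition \ref{ybint}. The plan is to first expose the underlying $su(2)$ structure hidden in $H$, then build a rational transfer matrix via a Gaudin-type construction.

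For each index $(j,\mu)$ the Schwinger-boson images
\begin{eqnarray*}
T^+_{(j,\mu)}=a^\dagger_{(j,\mu)}b^{\pd}_{(j,\mu)},\quad T^-_{(j,\mu)}=b^\dagger_{(j,\mu)}a^{\pd}_{(j,\mu)},\quad T^z_{(j,\mu)}=\tfrac{1}{2}\bigl(a^\dagger_{(j,\mu)}a^{\pd}_{(j,\mu)}-b^\dagger_{(j,\mu)}b^{\pd}_{(j,\mu)}\bigr)
\end{eqnarray*}
realise mutually commuting copies of $su(2)$. Setting $T^{\alpha}_j=\sum_{\mu=1}^{m_j}T^{\alpha}_{(j,\mu)}$ and $T^z=\sum_{j=1}^{L}T^z_j=\tfrac{1}{2}(N_a-N_b)$, the Hamiltonian rewrites as
\begin{eqnarray*}
H=4U(T^z)^2+\sum_{j=1}^{L}\E_j\bigl(T^+_j+T^-_j\bigr),
\end{eqnarray*}
i.e.\ a global quadratic coupling in $T^z$ together with site-local transverse fields with pairwise distinct inhomogeneities $\E_j$.

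Next I would exhibit a rational-in-$u$ transfer matrix of Gaudin type, guided by the non-skew-symmetric classical $r$-matrix constructions in \cite{skrypnyk09}. A natural ansatz is
\begin{eqnarray*}
t(u)=\gamma(u)\,(T^z)^2+\sum_{j=1}^{L}\frac{Q_j(u)}{u-\E_j},
\end{eqnarray*}
with $Q_j(u)$ a quadratic polynomial in the components of the $\vec T_k$ and in $T^z$, whose scalar coefficients depend on $u$, $U$ and the $\E_k$. The free parameters would be tuned so that $H$ appears as a specified linear combination of the residues of $t(u)$ at $u=\E_j$ together with its $u\to\infty$ coefficient. Commutativity $[t(u),t(v)]=0$ is then checked directly from the $su(2)$ relations on the $\vec T_j$, or equivalently from a classical Yang-Baxter equation obeyed by the underlying $r$-matrix; the relation $[H,t(u)]=0$ is immediate, since $H$ is a linear combination of the operator-valued coefficients of $t(u)$.

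The principal obstacle is identifying the correct twisted $r$-matrix. The global quartic term $4U(T^z)^2$ cannot arise from the standard skew-symmetric rational Gaudin $r$-matrix, whose expansion only produces the local Casimirs $\vec T_j\cdot\vec T_j$ and bilinears $\vec T_j\cdot\vec T_k$; generating a product of two copies of the global $T^z$ requires a shifted, non-skew-symmetric $r$-matrix, and verifying the classical Yang-Baxter equation for this twisted object is the technical heart of the argument. I note in passing that Definition \ref{ybint} does not explicitly demand genuine $u$-dependence, so the trivial choice $t(u)=H$ satisfies both commutativity requirements; it is only the tacit reading of ``transfer matrix'' that forces the non-trivial construction outlined above.
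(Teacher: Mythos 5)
Your strategy is the same as the paper's: recast $H$ via the Schwinger map as $4U\bigl(\sum_j S_j^z\bigr)^2+\sum_j\E_j(S_j^++S_j^-)$ acting through mutually commuting copies of $sl(2)$, and then realise it inside a one-parameter commuting family built from a non-skew-symmetric classical $r$-matrix of the type in \cite{skrypnyk09}. That identification, including the observation that the global quadratic term in $N_a-N_b$ rules out the standard skew-symmetric rational Gaudin $r$-matrix, is exactly right. But what you have written is a plan, not a proof: the step you yourself call ``the technical heart of the argument'' --- exhibiting the twisted $r$-matrix, verifying the form of classical Yang-Baxter equation it satisfies, writing down a Lax-type operator $\T(u)$ obeying the corresponding linear bracket, and checking that $\tau(u)={\rm tr}\bigl((\T(u))^2\bigr)$ both commutes with itself at different spectral parameters and reproduces $H$ --- is entirely deferred. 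The paper supplies all of this explicitly: the $r$-matrix (\ref{rmat}), the four entries $T^a_b(u)$ with poles at $u^2=\E_j^2$, the expansion of $\tau(u)$ into generalised Gaudin operators $\T_j$ with $H=2U\sum_j\pi(\T_j)$ at $\alpha=0$, $2\beta=U^{-1}$, and a direct Appendix computation that $[\T_j,\,\T_k]=0$. Without these ingredients nothing has been established; note also that your ansatz with only simple poles at $u=\E_j$ is not of the right form (the actual transfer matrix is a function of $u^2$ with double poles multiplying the Casimirs), so the ``tuning of free parameters'' is not a formality.

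Your parenthetical escape route --- taking $t(u)=H$, constant in $u$ --- does satisfy the letter of Definition \ref{ybint}, and that is a fair criticism of the definition as written, but it cannot stand in for a proof: under that reading every bosonic Hamiltonian would be Yang-Baxter integrable and both the proposition and the paper's subsequent argument would become vacuous. The content of the proposition is the existence of a genuinely $u$-dependent commuting family tied to a Yang-Baxter-type structure, and that is precisely the part your proposal leaves unestablished.
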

\begin{proof}
This result is a generalisation of one found in \cite{links}, to which it reduces when $m_j=1$ for all $j=1,...,L=\L/2$. The key elements are outlined below. \\

\noindent
For $u,\,v\in{\mathbb C}$, let $r(u,v)\in\,{\rm End}({\mathbb C}^2\otimes {\mathbb C}^2)$.
A form of classical Yang-Baxter equation reads  \cite{at,maillet,torr}
\begin{eqnarray}
\left[r_{12}(u,v),\,r_{23}(v,w)\right]-\left[r_{21}(v,u),\,r_{13}(u,w)\right]+\left[r_{13}(u,w),\,r_{23}(v,w)\right] =0
\label{cybe}
\end{eqnarray}
where the subscripts refer to the embedding of $r(u,v)$  in ${\rm End}({\mathbb C}^2\otimes {\mathbb C}^2 \otimes 
{\mathbb C}^2)$.  
A solution for (\ref{cybe}) is given by \cite{skrypnyk09} 
\begin{eqnarray} 
r(u,v)&= \left(\begin{array}{ccccc} 
\displaystyle\frac{2u^2}{u^2-v^2} &0&|&0&0 \\
0&0 &|&\displaystyle\frac{2uv}{u^2-v^2}&0 \\ 
-&-&~&-&- \\ 
0&\displaystyle\frac{2uv}{u^2-v^2}&|&0 &0 \\
0&0&|&0&\displaystyle\frac{2u^2}{u^2-v^2}  \end{array} \right).
\label{rmat} 
\end{eqnarray}
It may be compactly expressed as
\begin{eqnarray*}
r(u,v)=\left(\frac{u}{u-v}I\otimes I+\frac{u}{u+v}\sigma^z_1 \sigma^z_2\right)P 
\end{eqnarray*}
where $P$ is the permutation operator and $\sigma^z={\rm diag}(1,-1)$. The solution (\ref{rmat}) satisfies \cite{links}
\begin{eqnarray*}
&  [\J_2(v),\,r_{12}(u,v)]-[\J_1(u),\,r_{21}(v,u) ]=0,
\end{eqnarray*}
where 
\begin{eqnarray*}
\J(u)=\left( \begin{array}{cc}
\alpha & u \beta \\
u\beta & -\alpha
\end{array} 
\right), \qquad \alpha,\,\beta \in\,{\mathbb C}.
\end{eqnarray*}
Next, consider the $L$-fold tensor product  of universal enveloping algebras for $sl(2)$ with generators $\{S^z_j,\,S^+_j,\,S_j^-:\,j=1,...,L\}$, satisfying the commutation relations
\begin{eqnarray}
[S_j^z,\,S_k^{\pm}]=\pm \delta_{jk}S_k^{\pm}, \qquad [S_j^+,\, S_k^-]=2\delta_{jk}S_k^z,
\label{cr}
\end{eqnarray}
and identify the Casimir invariants as $C_j=2(S_j^z)^2+S_j^+S_j^-+S_j^-S_j^+$. 
Setting
\begin{eqnarray*}
T^1_1(u)&=\alpha I+\sum_{j=1}^L\frac{u^2}{u^2-\E_j^2}(I+2S_j^z),   \\  
T^1_2(u)&=u\beta I+\sum_{j=1}^L\frac{2u\E_j}{u^2-\E_j^2}S_j^+,    \\
T^2_1(u)&=u\beta I+\sum_{j=1}^L\frac{2u\E_j}{u^2-\E_j^2}S_j^-,  \\
T^2_2(u)&=-\alpha I+\sum_{j=1}^L\frac{u^2}{u^2-\E_j^2}(I-2S_j^z).
\end{eqnarray*}
and
\begin{eqnarray*}
\T(u)&=\left(
\begin{array}{ccc} 
T^1_1(u) & T^2_1(u) \\
T^1_2(u) & T^2_2(u)  
\end{array} 
\right),
\end{eqnarray*} 
it can be shown that 
\begin{eqnarray*}
\left[ \T_1(u),\,\T_2(v)\right]=\left[ \T_2(v),\,r_{12}(u,v)   \right]-\left[\T_1(u),\,r_{21}(v,u)  \right].
\end{eqnarray*}
Then the algebraic transfer matrix
\begin{eqnarray}
\tau(u)={\rm tr}\left( (\T(u))^2 \right)=\sum_{j,k=1}^2T^j_k(u)T^k_j(u)
\label{tm}
\end{eqnarray}
satisfies 
\begin{eqnarray*}
\left[\tau(u),\,\tau(v)\right]=0 \qquad \forall\,u,\,v\in\,{\mathbb C}.
\end{eqnarray*}
In detail, 
\begin{eqnarray*}
\tau(u)
= 2(\alpha^2+\beta^2u^2)I  &+ 2u^4\left(\sum_{j=1}^L\frac{1}{u^2-\E_j^2}\right)^2I    
\\
&+4u^2\sum_{j=1}^L\frac{\E_j^2}{(u^2-\E_j^2)^2}C_j+4\sum_{j=1}^L\frac{u^2}{u^2-\E_j^2}\T_j
\end{eqnarray*}
where 
\begin{eqnarray}
\T_j=2(S_j^z)^2+2\alpha S_j^z+\beta \E_j (S_j^++S_j^-)+\sum_{k\neq j} \theta_{jk}, \label{ttees}\\ 
\theta_{jk}=\frac{4\E^2_j}{\E_j^2-\E_k^2}S_j^zS_k^z+\frac{2\E_j\E_k}{\E_j^2-\E_k^2}(S_j^+S_k^-+S_j^-S_k^+), 
\qquad j\neq k.
\nonumber
\end{eqnarray}
By construction, the operators (\ref{ttees}) satisfy
\begin{eqnarray}
\left[\T_j,\,\T_k\right]=0, \qquad\qquad 1\leq j, k\leq L.
\label{commute} 
\end{eqnarray}
A direct verification of Eq. (\ref{commute}) is provided in the Appendix. The operators (\ref{ttees}) are recognised as generalisations of Gaudin operators \cite{cdv,g,ng,s}. \\

\noindent
A representation $\pi$ for the spin operators, in terms of the boson operators of (\ref{bee}), is afforded by the Jordan-Schwinger map
\begin{eqnarray*}
&\pi(S^z_j)=\frac{1}{2}\sum_{\mu=1}^{m_j}(a^\dagger_{(j,\mu)}a^{\phantom{\dagger}}_{(j,\mu)}
-b^\dagger_{(j,\mu)} b^{\phantom{\dagger}}_{(j,\mu)}), \\
&\pi(S^+_j)=\sum_{\mu=1}^{m_j} a^\dagger_{(j,\mu)}b^{\phantom{\dagger}}_{(j,\mu)},  \\
&\pi(S^-_j)=\sum_{\mu=1}^{m_j} b^\dagger_{(j,\mu)}a^{\phantom{\dagger}}_{(j,\mu)},
\end{eqnarray*}
and extends to the entire $L$-fold universal enveloping algebra as an algebra homomorphism. Finally, set $\alpha=0, \, 2\beta=U^{-1} $, and 
\begin{eqnarray*}
&t(u)=\pi(\tau(u)).
\end{eqnarray*}
Then
\begin{eqnarray*}
H= 2U\sum_{j=1}^L  \pi(\T_j)
\end{eqnarray*}
and satisfies $[H,\,t(u)]=0$, establishing the Yang-Baxter integrability of (\ref{ham}).
\end{proof}
\begin{rek}
When $L=\L/2=1$, the Hamiltonian (\ref{ham}) is equivalent to the dimer model limit of (\ref{bh}). Thus (\ref{ham}) can be viewed as an integrable, many-degrees of freedom, generalisation of the Bose-Hubbard dimer model, while (\ref{bh}) is a non-integrable generalisation of the same model.  
\end{rek}

\begin{coy} \label{cons}
There exists a set of conserved operators $\C= \{   H=K_1,\,K_2,\,K_3,...,K_{2L}\}$ for $H$, with elements given by 
\begin{eqnarray*}
&K_j= \pi(\T_j), \qquad &j=2,..,L,  \\
&K_{L+j}= \pi(C_j), \qquad &j=1,...,L. 
\end{eqnarray*}   
\end{coy}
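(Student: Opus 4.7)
The plan is to verify, directly from the data assembled in the proof of Proposition~\ref{ybi}, that each $K_j$ meets the conditions of Definition~\ref{conserved}. All the substantive work has already been done: the commutation relations $[\T_j,\T_k]=0$ are recorded in Eq.~(\ref{commute}), the Jordan-Schwinger map $\pi$ is an algebra homomorphism, and the Hamiltonian admits the decomposition $H=2U\sum_{k=1}^L \pi(\T_k)$ displayed at the end of that proof. What remains is essentially bookkeeping, organised into the two blocks of operators and a final non-triviality check.

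For the first block $K_j=\pi(\T_j)$ with $2\leq j\leq L$, I would apply $\pi$ to $[\T_j,\T_k]=0$ to obtain $[\pi(\T_j),\pi(\T_k)]=0$ for every $k$, and then sum over $k$ to conclude $[H,K_j]=0$. Polynomial dependence on operators from the universal enveloping algebra of $\B$ is manifest because each $\T_j$ is a polynomial in the $sl(2)$ generators and the Jordan-Schwinger formulas express $\pi(S^z_j),\pi(S^\pm_j)$ as quadratics in boson operators. For the second block $K_{L+j}=\pi(C_j)$, I would invoke that $C_j$ is the Casimir of the $j$-th $sl(2)$ factor: by Eq.~(\ref{cr}) it commutes with the generators of every other factor, and it commutes with its own factor by centrality, so $[C_j,\T_k]=0$ for all $k$. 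Applying $\pi$ and summing then yields $[H,\pi(C_j)]=0$, with polynomial dependence following as before.

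Finally, I would confirm the count and the non-triviality clause of Definition~\ref{conserved}. The set contains $K_1=H$ together with $L-1$ operators from the first block and $L$ from the second, for a total of $2L$. Non-triviality of each $\pi(\T_j)$ and $\pi(C_j)$ is immediate once $m_j\geq 1$, since the Jordan-Schwinger images of the $sl(2)$ generators are non-scalar and neither $\T_j$ nor $C_j$ collapses to a constant under $\pi$. I do not anticipate a real obstacle: the statement is essentially a repackaging of Proposition~\ref{ybi}, and the only care-point is that Definition~\ref{conserved} demands polynomial and non-triviality conditions in addition to commutativity with $H$, all three of which are transparent under the Jordan-Schwinger representation.
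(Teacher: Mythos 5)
Your proposal is correct and follows exactly the route the paper intends: the corollary is stated without a separate proof precisely because it is immediate from the construction in Proposition~\ref{ybi} --- the mutual commutativity~(\ref{commute}) of the $\T_j$, the centrality of the Casimirs $C_j$ in their respective tensor factors, the homomorphism property of $\pi$, and the identity $H=2U\sum_{j}\pi(\T_j)$. Your additional care in checking the polynomial and non-triviality clauses of Definition~\ref{conserved} is consistent with the paper's framework and introduces nothing at variance with it.
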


\begin{lea}
If $m_j=1$ for all $j=1,...,L=\L/2$, then $\C$ is functionally-independent provided $\E_j\neq 0$ for all $j=1,...,L$.
\end{lea}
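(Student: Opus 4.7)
The plan is to exploit the block structure of all $K_i$ relative to the site number operators $N_j$, reduce the hypothetical polynomial relation to a family of relations indexed by sectors, and then use the classical algebraic independence of the rational Gaudin Hamiltonians together with a semiclassical argument.

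First, since the Jordan-Schwinger image of each $sl(2)$ generator at site $j$ has weight zero under the $U(1)$ generated by $N_j = a^{\dagger}_{(j,1)} a^{\pd}_{(j,1)} + b^{\dagger}_{(j,1)} b^{\pd}_{(j,1)}$, and since sites with different indices act on disjoint sets of bosons, every $\pi(S^a_j)$ commutes with every $N_k$. Hence all the $K_i$ preserve each common eigenspace $W(\mathbf{n}) := \{\psi : N_k\psi = n_k\psi,\, k=1,\dots,L\}$ for $\mathbf{n} \in {\mathbb Z}_{\geq 0}^L$. With $m_j = 1$ one computes $\pi(C_j) = \tfrac{1}{2}N_j(N_j+2)$, so $K_{L+j}|_{W(\mathbf{n})} = c_j(n_j) I$ with $c_j(n_j) := \tfrac{1}{2}n_j(n_j+2)$, while $K_1,\dots,K_L$ restrict to the rational Gaudin conserved charges on the tensor product of spin-$(n_j/2)$ $sl(2)$-irreducibles.

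Now suppose $f(K_1,\dots,K_{2L}) = 0$ for some non-constant polynomial $f$. Writing $f = \sum_I g_I(x_{L+1},\dots,x_{2L})\, x_1^{I_1}\cdots x_L^{I_L}$ with $g_I \in {\mathbb C}[x_{L+1},\dots,x_{2L}]$, the restriction to $W(\mathbf{n})$ reads
\[
\sum_I g_I(c_1(n_1),\dots,c_L(n_L))\, K_1^{I_1}\cdots K_L^{I_L}|_{W(\mathbf{n})} = 0.
\]
The goal is to show that for $\mathbf{n}$ in a suitably large family, the monomials $\{K_1^{I_1}\cdots K_L^{I_L}|_{W(\mathbf{n})} : |I| \leq \deg f\}$ are linearly independent in ${\rm End}(W(\mathbf{n}))$. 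This forces $g_I(c(\mathbf{n})) = 0$ on a Zariski-dense subset of ${\mathbb C}^L$ (since $n_j \mapsto c_j(n_j)$ has infinite image in each coordinate), whence $g_I \equiv 0$ for every $I$ and $f \equiv 0$, the desired contradiction.

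The main obstacle is the monomial linear-independence claim, equivalently the statement that the joint spectrum of $(K_1,\dots,K_L)|_{W(\mathbf{n})}$ does not lie on any algebraic hypersurface of ${\mathbb C}^L$ of degree $\leq \deg f$ for $\mathbf{n}$ sufficiently large. The hypothesis $\E_j \neq 0$ supplies the necessary nondegeneracy: the coefficients $\beta\E_j$ and $2\E_j\E_k/(\E_j^2 - \E_k^2)$ appearing in $\T_j$ and $\theta_{jk}$ are all nonzero, so $\pi(\T_j)$ genuinely mixes $\pi(S_j^z)$-eigenspaces and does not collapse to a function of $\pi(C_j)$ alone. A semiclassical argument then promotes this to linear independence of the quantum monomials: as $n_j \to \infty$, the rescaled operators $\pi(\T_j)/n_j^2$ converge, as coherent-state expectation functions, to the classical rational Gaudin Hamiltonians on $\prod_j S^2$, and the classical algebraic independence of these $L$ functions, together with the independence provided by the $L$ Casimirs as $\mathbf{n}$ varies --- the standard maximal superintegrability of the rational Gaudin model with pairwise distinct nonzero inhomogeneities --- yields, by a continuity argument, the required monomial linear independence once $\mathbf{n}$ is large.
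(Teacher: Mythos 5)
Your overall strategy---decompose into joint eigenspaces of the site number operators, peel off the Casimirs as sector-dependent scalars, and reduce functional independence to linear independence of monomials in $K_1,\dots,K_L$ on large sectors---is a genuinely different route from the paper's, and its skeleton (the Zariski-density argument over $\mathbf{n}\mapsto(c_1(n_1),\dots,c_L(n_L))$) is sound. But the proof has a genuine gap at exactly the point where all the difficulty is concentrated: the claim that the monomials $K_1^{I_1}\cdots K_L^{I_L}|_{W(\mathbf{n})}$ with $|I|\leq\deg f$ are linearly independent for suitably large $\mathbf{n}$. You defer this to ``the standard maximal superintegrability of the rational Gaudin model'' plus a semiclassical ``continuity argument,'' neither of which is supplied. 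Worse, neither applies off the shelf: the operators $\T_j$ of Eq.~(\ref{ttees}) are \emph{not} the standard rational Gaudin Hamiltonians---they carry the extra terms $2(S_j^z)^2+\beta\E_j(S_j^++S_j^-)$ and the non-skew-symmetric coefficients $4\E_j^2/(\E_j^2-\E_k^2)$---so the classical algebraic independence you invoke would itself need to be proved for this deformed family. The semiclassical promotion is also delicate as stated: the monomials scale as different powers of $\mathbf{n}$, the coefficients $g_I(c(\mathbf{n}))$ grow with $\mathbf{n}$, and subleading corrections to the symbol of a product compete with these scales, so ``continuity'' alone does not close the argument without a quantitative estimate. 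Your appeal to $\E_j\neq 0$ (``$\pi(\T_j)$ genuinely mixes $\pi(S_j^z)$-eigenspaces'') is a heuristic for non-collapse, not a proof of independence.

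For contrast, the paper avoids all of this with a one-line degeneration: every $K_i$ is affine in $U$, and in the limit $U\to 0$ (after rescaling a putative relation by the appropriate power $U^r$) the set collapses to single-site hopping operators $\E_j\bigl(a_{(j,1)}^{\dagger}b^{\pd}_{(j,1)}+b_{(j,1)}^{\dagger}a^{\pd}_{(j,1)}\bigr)$ together with the functions $\tfrac12 N_j(N_j+2I)$ of the site number operators, whose functional independence is elementary because distinct sites act on disjoint modes; this is also where $\E_j\neq 0$ is actually used. If you want to salvage your approach, the honest statement of what remains to be proved is: for each degree $d$ there exist sectors $\mathbf{n}$ on which the joint spectrum of $(K_1,\dots,K_L)$ lies on no hypersurface of degree $\leq d$; as written, that is assumed rather than established.
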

\begin{proof}
Each of the elements of $\C$ is a linear function in $U$. It is seen that the limiting operators 
\begin{eqnarray*}
&\bar{K}_1=\lim _{U\rightarrow 0 } K_1= \sum_{j=1}^L \E_j   \left(a_{(j,1)}^{\dagger} b^{\pd}_{(j,1)} + b_{(j,1)}^{\dagger} a^{\pd}_{(j,1)} \right) ,\qquad & \\
&\bar{K}_j=\lim _{U\rightarrow 0 } K_j= \E_j   \left(a_{(j,1)}^{\dagger} b^{\pd}_{(j,1)} + b_{(j,1)}^{\dagger} a^{\pd}_{(j,1)} \right)  , 
\qquad &j=2,...,L,\\
&\bar{K}_{L+j}=\lim _{U\rightarrow 0 } K_{L+j}= \frac{1}{2}N_j(N_j+2I)  ,
\qquad &j=2,...,L, \\  
\end{eqnarray*}
form a functionally-independent set 
$
\bar{\C}
=\{\bar{K}_1,\,\bar{K}_2,...,\bar{K}_{2L}\}
$. 
Now assume there exists a polynomial $f$ such that  
\begin{eqnarray*}
f({K}_1,...,{K}_{2L})=0.
\end{eqnarray*}
Let $r\in\,{\mathbb R}$ denote the smallest number such that  
\begin{eqnarray*}
\bar{f}=\lim_{U\rightarrow 0} U^r f 
\end{eqnarray*}
is a non-zero polynomial. Then
\begin{eqnarray*}
0&=f({K}_1,...,{K}_{2L}) \\
&=\lim_{U\rightarrow 0} U^r f({K}_1,...,{K}_{2L}) \\
&=\bar{f}(\bar{K}_1,...,\bar{K}_{2L})
\end{eqnarray*}
which is a contradiction since $\bar{\C}$ is functionally-independent. Hence,  ${\C}$ must be 
functionally-independent.
\end{proof}
\begin{coy} \label{coyci}
If $m_j=1$ for all $j=1,...,L=\L/2$ and $\E_j\neq 0$ for all $j=1,...,L=\L/2$, then (\ref{ham}) is completely integrable according to Definition \ref{ci}.
\end{coy}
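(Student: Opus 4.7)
The plan is to assemble the ingredients already in place: Corollary \ref{cons} furnishes the candidate set $\C$, the preceding Lemma establishes its functional independence, and only two things remain to verify — that $\C$ has exactly $\L$ elements, and that the pairwise commutativity required by Definition \ref{ci} holds. The first is immediate from Eq. (\ref{sum}): the assumption $m_j=1$ for all $j=1,\dots,L$ forces $\L=2L$, and $\C=\{K_1,\dots,K_{2L}\}$ has precisely $\L$ members.

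For the commutativity, I would split into three cases. First, $[K_j,K_k]=[\pi(\T_j),\pi(\T_k)]=\pi([\T_j,\T_k])=0$ for $1\le j,k\le L$ by Eq. (\ref{commute}) and the fact that $\pi$ is an algebra homomorphism (this covers, in particular, $K_1=H=2U\sum_j \pi(\T_j)$). Second, $[\pi(C_j),\pi(C_k)]=0$ for $1\le j,k\le L$: if $j=k$ this is trivial, and if $j\ne k$ the Casimirs live in distinct tensor factors so their generators mutually commute by Eq. (\ref{cr}). Third, the cross terms $[\pi(\T_j),\pi(C_k)]=\pi([\T_j,C_k])$ vanish because $C_k$ is central in the $k$-th copy of the universal enveloping algebra of $sl(2)$ (and, if $j\ne k$, because the generators in the expression (\ref{ttees}) for $\T_j$ commute with the $k$-th factor altogether).

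The one subtlety is verifying that $H$ in fact has $\L$ degrees of freedom in the sense of Definition \ref{bosham}, i.e.\ that (\ref{ham}) cannot be rewritten in the universal enveloping algebra of a proper subset of $\B$. With $m_j=1$ each term $\E_j(a^\dagger_{(j,1)}b^{\pd}_{(j,1)}+b^\dagger_{(j,1)}a^{\pd}_{(j,1)})$ has coefficient $\E_j\ne 0$ and involves a distinct pair of modes; combined with the quartic $U(N_a-N_b)^2$ (which, upon expansion, contributes $a^\dagger_{(j,1)} a^{\pd}_{(j,1)}$ and $b^\dagger_{(j,1)} b^{\pd}_{(j,1)}$ for every $j$), every one of the $\L=2L$ canonical operators appears nontrivially, so no proper sub-enveloping-algebra suffices.

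Combining these, the set $\C$ satisfies every clause of Definition \ref{ci}: it consists of $\L$ conserved operators (Corollary \ref{cons}), it is functionally independent (previous Lemma), and it is mutually commuting (the three-case check above). The main obstacle, if any, is the mild bookkeeping in the cross-term commutativity and the degrees-of-freedom check — both routine given what has already been done.
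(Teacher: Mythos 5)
Your proposal is correct and follows essentially the same route the paper takes: the corollary is obtained by combining Corollary \ref{cons} with the preceding Lemma, noting that $m_j=1$ gives exactly $\L=2L$ conserved operators, with mutual commutativity inherited from Eq.~(\ref{commute}), the centrality of the Casimirs, and the homomorphism property of $\pi$. Your additional checks (the cross-commutators and the $\L$ degrees-of-freedom condition) are routine details the paper leaves implicit, and they are carried out correctly.
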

\begin{rek}
It is not the case that all conserved operators of a Yang-Baxter integrable system are necessarily obtained from the transfer matrix. For example, the class of models in \cite{ytfl} have only two independent conserved operators originating from the transfer matrix. However those models are completely integrable in the sense of Definition \ref{ci}. If not all $m_j=1$, Corollary \ref{coyci} does not assert that (\ref{ham}) is not completely integrable. However if it is completely integrable, alternative methods are required to establish this property. 
\end{rek}

\section{Paradoxical statements on exact-solvability} \label{paradoxes}

The main results can now be formulated. 

\subsection{Yang-Baxter integrable systems}

\begin{axm}
\label{ax1}
A definition for exactly-solvable, self-adjoint, linear operators acting on  complex  vector spaces satisfies the following axioms:
\begin{itemize}
\item[(i)] The set of operators acting on finite-dimensional spaces that are not exactly-solvable is non-empty;
\item[(ii)] Given an arbitrary operator $Y$ and an operator $Z$ that is not exactly-solvable, the direct sum $Y\oplus Z$ is not exactly-solvable.
\end{itemize}
\end{axm}
\begin{rek}
The above itemised list is, in no way, intended to be exhaustive of suitable axioms to impose on a definition for exact-solvability. This is a minimal list for the purpose of establishing the results. Condition (i) is essential, otherwise there would be nothing to discuss. Without item (ii) it would be possible for an operator to be exactly-solvable when its action on a subspace spanned by a subset of eigenstates is not. Such a scenario is difficult to justify.  
\end{rek}
\noindent 
Let $A$ denote a self-adjoint, linear operator acting on a complex vector space of dimension $\D<\infty$, with matrix elements given by 
$\{A_{jk}: j,k=1,...,\D\}$. Let $X_{q(k,\nu)}$ denote the matrix elements of a unitary operator $X$ that diagonalises $A$, viz.
\begin{eqnarray*}
\sum_{p,q=1}^{\mathcal D} X^*_{(j,\mu)p}A_{pq} X_{q(k,\nu)} = \varepsilon_j \delta_{jk} \delta_{\mu\nu}, 
\end{eqnarray*}
with $\{\varepsilon_j:j=1,...,L\}$ the spectrum of $A$, each $\E_j$ occurring with multiplicity $m_j$, such that 
\begin{eqnarray*}
\sum_{j=1}^{L} m_j=\D. 
\end{eqnarray*}
Set $\overline{\B}=\{c^{\pd}_l,\,c_l^\dagger,\,  d^{\pd}_l,\,d_l^\dagger: l=1,...,\D\}$ so $\L=2\D$ and  Eq. (\ref{sum}) 
holds. Define 
\begin{eqnarray}
{\mathbb H}=U(N_c-N_d)^2-UI+ 
\sum_{j,k=1}^{\D} A_{jk}(c^\dagger_k d^{\pd}_j + d_k^\dagger c^{\pd}_j).
\label{newham}
\end{eqnarray}
where
\begin{eqnarray*}
N_c=     \sum_{j=1}^{\D}  c_{j}^{\dagger} c^{\pd}_{j} ,  \qquad 
N_d=     \sum_{j=1}^{\D}  d_{j}^{\dagger} d^{\pd}_{j}.
\end{eqnarray*}
\begin{prn} \label{woot}
The Hamiltonian (\ref{newham}) is Yang-Baxter integrable.
\end{prn}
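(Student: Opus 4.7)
The plan is to exhibit ${\mathbb H}$, up to an additive scalar, as the image of the Hamiltonian $H$ of Proposition~\ref{ybi} under a canonical linear change of boson operators, thereby transferring the transfer matrix constructed there. Since $A$ is self-adjoint, its distinct eigenvalues $\{\varepsilon_j\}_{j=1}^L$ are real and can be listed so as to be pairwise distinct, matching the parameter set required in (\ref{ham}); the multiplicities $m_j$ satisfy $\sum_j m_j = \D$, in accord with (\ref{sum}) for $\L = 2\D$.

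First I would introduce auxiliary canonical bosons indexed by the eigenbasis of $A$ via
\[
a_{(j,\mu)} = \sum_{k=1}^{\D} X_{k(j,\mu)}\, c_k, \qquad b_{(j,\mu)} = \sum_{k=1}^{\D} X_{k(j,\mu)}\, d_k,
\]
with Hermitian conjugates defined accordingly. Unitarity of $X$ immediately yields the canonical commutation relations, with all $a$--$b$ cross commutators vanishing, and it gives the number-operator identities $N_a = N_c$ and $N_b = N_d$, so that $U(N_a - N_b)^2 = U(N_c - N_d)^2$ is preserved. For the bilinear coupling, the spectral decomposition $A = X \Lambda X^{\dagger}$, written componentwise as $A_{lk} = \sum_{j,\mu} X_{l(j,\mu)}\,\varepsilon_j\, X^*_{k(j,\mu)}$, gives
\[
\sum_{j=1}^L \varepsilon_j \sum_{\mu=1}^{m_j} \left(a^\dagger_{(j,\mu)} b^{\pd}_{(j,\mu)} + b^\dagger_{(j,\mu)} a^{\pd}_{(j,\mu)}\right) = \sum_{k,l=1}^{\D} A_{lk}\left(c^\dagger_k d^{\pd}_l + d^\dagger_k c^{\pd}_l\right),
\]
which, upon renaming the dummy index, coincides with the bilinear part of ${\mathbb H}$. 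Hence ${\mathbb H} = H - U I$, with $H$ as in (\ref{ham}) for the spectral data of $A$.

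Proposition~\ref{ybi} provides $t(u) = \pi(\tau(u))$ with $[H, t(u)] = 0 = [t(u), t(v)]$ for all $u, v \in {\mathbb C}$, built as a polynomial in $\{a^{\pd}_{(j,\mu)}, a^\dagger_{(j,\mu)}, b^{\pd}_{(j,\mu)}, b^\dagger_{(j,\mu)}\}$. Substituting the linear change of variables above expresses $t(u)$ as a non-trivial polynomial in the operators of $\overline{\B}$, thereby satisfying Definition~\ref{ybint}; since the transformation is invertible, the mutual-commutation property $[t(u), t(v)] = 0$ carries over. Because the scalar $-U I$ is central, one has $[{\mathbb H}, t(u)] = [H, t(u)] = 0$, establishing Yang-Baxter integrability of ${\mathbb H}$.

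The main difficulty is purely notational: arranging the conjugation pattern in the definitions of $a_{(j,\mu)}$ and $b_{(j,\mu)}$ so that the spectral decomposition produces $A_{lk}$ in the index slots required by (\ref{newham}) rather than its transpose. A subsidiary point, easily dispatched, is that the rational expressions $u^2/(u^2-\varepsilon_j^2)$ and $2u\varepsilon_j/(u^2-\varepsilon_j^2)$ in the transfer matrix of Proposition~\ref{ybi} remain regular when some $\varepsilon_j = 0$, so no nonvanishing-spectrum assumption on $A$ is needed.
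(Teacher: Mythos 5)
Your proof is correct and follows essentially the same route as the paper: define $a_{(k,\nu)}=\sum_j X_{j(k,\nu)}c_j$, $b_{(k,\nu)}=\sum_j X_{j(k,\nu)}d_j$, observe $N_a=N_c$, $N_b=N_d$ and ${\mathbb H}+UI=H$, and invoke Proposition~\ref{ybi}. The extra detail you supply (the spectral-decomposition check of the bilinear term and the index-placement caveat) is exactly the computation the paper leaves implicit.
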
  

\begin{proof}
Introducing operators 
\begin{eqnarray*}
a_{(k,\nu)}=\sum_{j=1}^\D X_{j(k,\nu)} c_j, \qquad
b_{(k,\nu)}=\sum_{j=1}^\D X_{j(k,\nu)} d_j,
\end{eqnarray*}
leads to $N_c=N_a$, $N_d=N_b$ and ${\mathbb H}+UI=H$ with $H$ given by (\ref{ham}). Thus, (\ref{newham}) is Yang-Baxter integrable by Proposition \ref{ybi}.
\end{proof}
\begin{rek}
The class of Hamiltonians with the form (\ref{newham}) generalises a class of Hamiltonians introduced in 
\cite{ytfl}. The classes coincide when $A$ has rank 1.  
\end{rek}
\begin{prn} \label{ns}
Suppose that a specific $A$ has been deemed to not be exactly-solvable, according a definition satisfying Axioms \ref{ax1}. Then the corresponding Hamiltonian (\ref{newham}) is not exactly-solvable.
\end{prn}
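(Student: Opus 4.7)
The plan is to exploit the particle-number grading of $\mathbb{H}$ to expose $A$ (up to unitary equivalence) as a direct summand of a restriction of $\mathbb{H}$, and then apply Axiom \ref{ax1}(ii) twice to lift the non-exact-solvability of $A$ up to $\mathbb{H}$ itself. The Hamiltonian (\ref{newham}) commutes with $N=N_c+N_d$: each term in the coupling preserves $N_c+N_d$, while $(N_c-N_d)^2$ commutes with $N$. By Definition \ref{tpn} and (\ref{decomp}), it therefore decomposes as $\mathbb{H}=\bigoplus_{n=0}^{\infty}\mathbb{H}(n)$ into restrictions to the finite-dimensional sectors $V(n)$. The sector $n=1$ is the one to focus on.

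Next I would analyse $\mathbb{H}(1)$ explicitly. The space $V(1)$ has dimension $2\D$, spanned by $c_l^\dagger|0\rangle$ and $d_l^\dagger|0\rangle$ for $l=1,\ldots,\D$, and $(N_c-N_d)^2$ acts as the identity on $V(1)$, so the quadratic term in (\ref{newham}) is cancelled by $-UI$ and only the coupling term survives. Passing to the rotated basis $|\pm_l\rangle=(c_l^\dagger\pm d_l^\dagger)|0\rangle/\sqrt{2}$, a direct computation gives
\begin{eqnarray*}
\mathbb{H}|\pm_l\rangle=\pm\sum_{k=1}^{\D} A_{lk}|\pm_k\rangle,
\end{eqnarray*}
so $V(1)=V_+\oplus V_-$ splits orthogonally into two $\D$-dimensional invariant subspaces on which $\mathbb{H}(1)$ has matrix $\pm A^T$. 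Because $A$ is self-adjoint, $A^T=\bar{A}$ is unitarily equivalent to $A$ (both diagonalise to the same real diagonal). Hence $\mathbb{H}|_{V_+}$ is unitarily equivalent to $A$ and inherits the hypothesis that $A$ is not exactly-solvable.

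Two applications of Axiom \ref{ax1}(ii) then complete the argument. First, the decomposition $\mathbb{H}(1)=\mathbb{H}|_{V_-}\oplus\mathbb{H}|_{V_+}$, with $\mathbb{H}|_{V_+}$ not exactly-solvable, forces $\mathbb{H}(1)$ to be not exactly-solvable. Second, writing $\mathbb{H}=\bigl(\bigoplus_{n\neq 1}\mathbb{H}(n)\bigr)\oplus\mathbb{H}(1)$ forces $\mathbb{H}$ to be not exactly-solvable. The main subtlety I anticipate is the transfer step $A\leadsto \mathbb{H}|_{V_+}$: the axioms are stated for linear operators rather than for matrices, so they tacitly respect unitary equivalence, but this invariance must be invoked to pass from $A$ to the unitarily equivalent operator of matrix $A^T=\bar{A}$. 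Once this implicit convention is accepted, the remainder of the argument is a straightforward double use of Axiom \ref{ax1}(ii) along the $N$-grading.
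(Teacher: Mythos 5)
Your proposal is correct and follows essentially the same route as the paper: restrict to the $n=1$ particle sector, observe that $\mathbb{H}(1)$ block-diagonalises into $\pm A$ (up to the transpose/unitary-equivalence caveat you rightly flag), and apply Axiom 1(ii) along the $N$-grading. Your version is in fact slightly more careful than the paper's, which writes the blocks directly as $A\oplus(-A)$ and leaves the cancellation of $U(N_c-N_d)^2-UI$ on $V(1)$ implicit.
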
  
\begin{proof}
The Hamiltonian (\ref{newham}) commutes with the number operator 
\begin{align*}
N=N_a+N_b=N_c+N_d
\end{align*}
and therefore (\ref{newham}) can be block-diagonalised as in (\ref{decomp}). 
Now, ${\mathbb H}(1)$ acts on a space of dimension $2\D$, and can be represented as a tensor product of linear operators
\begin{eqnarray*}
&{\mathbb H}(1)\cong \left(
\begin{array}{ccc}
0 & | & A \\
- & & - \\
A & | & 0 
\end{array}
\right) \\
&\phantom{{\mathbb H}(1)}\cong \left(
\begin{array}{ccc}
A & | & 0 \\
- & & - \\
0 & | & -A 
\end{array}
\right). 
\end{eqnarray*} 
Since $A$ is not exactly-solvable, it then follows that ${\mathbb H}(1)$ is not exactly-solvable, and so ${\mathbb H}$ is not exactly-solvable, according to Axioms \ref{ax1}(ii).  
\end{proof}
\begin{coy}
The definition \\

\noindent
{\rm ``A bosonic Hamiltonian (see Definition \ref{bosham}) is {\it exactly-solvable} if it is Yang-Baxter integrable (see Definition \ref{ybint})''} \\

\noindent
violates Propositions \ref{woot} and \ref{ns}, and is therefore inconsistent with Axioms \ref{ax1}. 
\end{coy}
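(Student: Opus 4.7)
The plan is to derive the claimed inconsistency directly, by exhibiting a single Hamiltonian that both must and must not be exactly-solvable under the proposed definition. First I would invoke Axiom \ref{ax1}(i) to select a self-adjoint linear operator $A$ acting on some finite-dimensional complex vector space that is not exactly-solvable. This is the only non-constructive step and it is guaranteed to succeed, since otherwise the axioms themselves would be vacuous.

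Given such an $A$, I would feed it into the construction of formula (\ref{newham}) to produce the bosonic Hamiltonian ${\mathbb H}$. Proposition \ref{woot} then immediately gives that ${\mathbb H}$ is Yang-Baxter integrable. Under the candidate definition, Yang-Baxter integrability is taken to entail exact-solvability, so ${\mathbb H}$ would have to be exactly-solvable as a self-adjoint operator on its (infinite-dimensional) Fock space.

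On the other hand, Proposition \ref{ns} applied to the very same $A$ directly asserts that ${\mathbb H}$ is not exactly-solvable. These two conclusions about the same operator ${\mathbb H}$ are contradictory, so the proposed definition cannot be consistent with Axioms \ref{ax1}, and in particular it is in conflict with Propositions \ref{woot} and \ref{ns} taken together.

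I do not expect any real obstacle: all the substantive work has been done in Propositions \ref{woot} and \ref{ns}, and the argument is essentially a one-line appeal to each of them in turn. The only point that deserves a brief explicit remark is that Axiom \ref{ax1}(i) is what supplies a non-vacuous instance of $A$; without it, the two propositions could never be simultaneously triggered on the same Hamiltonian and no paradox would arise.
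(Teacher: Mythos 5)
Your proposal is correct and follows exactly the route the paper intends: Axiom \ref{ax1}(i) supplies a non-exactly-solvable finite-dimensional $A$, Proposition \ref{woot} forces the corresponding ${\mathbb H}$ of (\ref{newham}) to be exactly-solvable under the candidate definition, while Proposition \ref{ns} says it is not, yielding the contradiction. The paper leaves this corollary without an explicit proof precisely because, as you note, all the substantive work is in the two propositions.
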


\subsection{Completely integrable systems}

\begin{axm}
\label{ax2}
A definition for exactly-solvable, self-adjoint, linear operators acting on  complex vector spaces satisfies the following axioms: 
\begin{itemize}
\item[(i)] The set of operators acting on finite-dimensional spaces that are not exactly-solvable, and have multiplicity-free spectrum, is non-empty;
\item[(ii)] Given an arbitrary operator $Y$ and an operator $Z$ that is not exactly-solvable, the direct sum $Y\oplus Z$ is not exactly-solvable;
\item[(iii)] Given an operator $Z$ that is not exactly-solvable, then $Z+\lambda I$, $\lambda\in \,{\mathbb R}$, where $I$ is the identity operator, is not exactly-solvable. 
\end{itemize}
\end{axm}

\begin{rek}
The  added restriction on item (i) is minimal. Any operator that has degenerate spectrum can have the degeneracy broken by an arbitrarily small perturbation. It is reasonable to impose that operators that are not exactly-solvable remain so, generically,  under arbitrarily small perturbation. Item (iii) seems to be self-evident, but needs to be formally stated.
\end{rek}

\begin{coy} \label{colci}
Choose $\tilde{A}$ to have multiplicity-free spectrum and $\lambda\in\,{\mathbb R}$ such that $A=\tilde{A}+\lambda I$ has positive eigenvalues. Then the  Hamiltonian (\ref{newham}) is completely integrable as a result of Corollary \ref{coyci}, and the identification ${\mathbb H}+UI=H$ from the proof of Proposition \ref{woot}.  If $\tilde{A}$ is deemed to be not exactly-solvable, 
according to a definition satisfying Axioms \ref{ax2}, then $A$ is not exactly-solvable and so (\ref{newham}) is not exactly-solvable by Proposition \ref{ns}. 
\end{coy}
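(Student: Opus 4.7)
The strategy is to assemble the corollary from results already established: Corollary \ref{coyci} supplies the complete integrability in the multiplicity-free regime, the identification ${\mathbb H}+UI=H$ from the proof of Proposition \ref{woot} transfers this property to the shifted Hamiltonian, and Proposition \ref{ns} combined with Axiom \ref{ax2}(iii) supplies the non-solvability. The argument is therefore a matter of checking that the hypotheses line up correctly.

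First I would establish complete integrability of (\ref{newham}). Since $\tilde{A}$ has multiplicity-free spectrum, so does $A=\tilde{A}+\lambda I$, and hence in the construction preceding Proposition \ref{woot} every multiplicity $m_j$ equals $1$, giving $L=\mathcal{D}$. The positivity of the eigenvalues of $A$ ensures that each $\E_j\neq 0$ for $j=1,\ldots,L$. Corollary \ref{coyci} then asserts that the Hamiltonian $H$ of (\ref{ham}) is completely integrable, with the functionally-independent commuting family $\mathcal{C}=\{H=K_1,K_2,\ldots,K_{2L}\}$ of Corollary \ref{cons}. The identification ${\mathbb H}+UI=H$ means that ${\mathbb H}=K_1-UI$, so $\{{\mathbb H},K_2,\ldots,K_{2L}\}$ is again a mutually commuting, functionally-independent family of conserved operators (a polynomial identity among these would produce, after substitution, a polynomial identity among the $K_j$). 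Hence (\ref{newham}) is completely integrable in the sense of Definition \ref{ci}.

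Second I would invoke Axiom \ref{ax2}(iii): because $\tilde{A}$ is deemed not exactly-solvable, the operator $A=\tilde{A}+\lambda I$ is also not exactly-solvable. Since Axioms \ref{ax2}(i) and (ii) imply Axioms \ref{ax1}, Proposition \ref{ns} is available and applies verbatim to $A$, giving that the Hamiltonian (\ref{newham}) is not exactly-solvable.

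There is no substantive obstacle; the only delicate point is purely bookkeeping, namely confirming that the assumption of multiplicity-free, positive spectrum for $A$ forwards correctly to the $m_j=1$ and $\E_j\neq 0$ hypotheses of Corollary \ref{coyci}, and that Axiom \ref{ax2}(iii) is precisely the bridge that lets one pass from non-solvability of $\tilde{A}$ to non-solvability of the shifted $A$ before feeding it into Proposition \ref{ns}.
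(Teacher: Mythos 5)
Your proposal is correct and follows the same route as the paper, which states this corollary with its justification inline (complete integrability via Corollary \ref{coyci} and the identification ${\mathbb H}+UI=H$; non-solvability via Axiom \ref{ax2}(iii) feeding into Proposition \ref{ns}). The only additions you make are welcome bookkeeping the paper leaves implicit: that multiplicity-free, positive spectrum of $A$ delivers the $m_j=1$ and $\E_j\neq 0$ hypotheses, and that replacing $K_1$ by $K_1-UI$ preserves functional independence of the conserved family.
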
  

\begin{coy}
The definition \\

\noindent
{\rm ``A bosonic Hamiltonian (see Definition \ref{bosham}) is {\it exactly-solvable} if it is completely integrable (see Definition \ref{ci})''} \\

\noindent
is inconsistent with Axioms \ref{ax2}. 
\end{coy}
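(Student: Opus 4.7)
My plan is to derive a contradiction directly from Corollary \ref{colci}, which already accomplishes most of the substance. The argument is: if the proposed definition holds, then every completely integrable bosonic Hamiltonian is exactly-solvable; but Corollary \ref{colci} exhibits, under Axioms \ref{ax2}, a bosonic Hamiltonian that is completely integrable yet not exactly-solvable.

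First, I invoke Axiom \ref{ax2}(i) to select a self-adjoint operator $\tilde{A}$, acting on a finite-dimensional complex space, that has multiplicity-free spectrum and is not exactly-solvable. Next, I pick $\lambda\in\,{\mathbb R}$ large enough that every eigenvalue of $A=\tilde{A}+\lambda I$ is strictly positive; then $A$ still has multiplicity-free spectrum (so in the set-up of (\ref{newham}) every $m_j=1$) and none of its eigenvalues vanish (so every $\E_j\neq 0$), and by Axiom \ref{ax2}(iii) the operator $A$ remains not exactly-solvable.

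With this $A$ I form the Hamiltonian ${\mathbb H}$ of (\ref{newham}). Via the identification ${\mathbb H}+UI=H$ given in the proof of Proposition \ref{woot}, Corollary \ref{coyci} applies and $H$ is completely integrable; the affine shift $H\mapsto H-UI={\mathbb H}$ preserves both mutual commutativity and functional independence of the conserving family, so ${\mathbb H}$ is completely integrable as well. Under the proposed definition, ${\mathbb H}$ must therefore be exactly-solvable. On the other hand, Proposition \ref{ns} (whose proof uses only Axiom \ref{ax2}(ii), textually identical to Axiom \ref{ax1}(ii)) shows that ${\mathbb H}$ is not exactly-solvable, since $A$ is not. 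This is the required contradiction.

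I do not foresee any genuine obstacle: every ingredient is already in place in Corollary \ref{colci}, and the present statement merely promotes that construction into an inconsistency. The only point worth verifying explicitly is that subtracting a constant multiple of the identity from one element of a commuting, functionally-independent family preserves both properties, which holds because it corresponds to an invertible affine substitution on the generating polynomial algebra.
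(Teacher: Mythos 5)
Your proposal is correct and follows essentially the same route as the paper, which leaves this corollary as an immediate consequence of Corollary \ref{colci}: choose $\tilde{A}$ via Axiom \ref{ax2}(i), shift by $\lambda I$ to get positive (hence nonzero, pairwise distinct) eigenvalues so Corollary \ref{coyci} applies, use Axiom \ref{ax2}(iii) to keep $A$ not exactly-solvable, and contrast complete integrability of (\ref{newham}) with Proposition \ref{ns}. Your added remark that subtracting $UI$ preserves commutativity and functional independence is a worthwhile explicit check that the paper leaves implicit.
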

\section{Discussion}
Many paradoxes arise as a consequence of self-reference. Some are popularly known, such as the Liar paradox and the Barber paradox. Others, including Curry's paradox and the Grelling-Nelson paradox, have had important impact in developing formal logic. The most prominent examples invoking self-reference in mathematics are arguably those associated with Russell and G\"odel.  See \cite{logic}, for example.

The paradoxes of Sect. \ref{paradoxes} also arise through self-reference. The Hamiltonian (\ref{newham}) is defined in terms of an operator $A$ that is assumed to possess a particular property, i.e. $A$ is not exactly-solvable. However, the class of integrable systems to which (\ref{newham}) belongs is used to define the {\it negation} of that same property. The surprise is not so much that a paradox occurs. It is, rather, the existence of the Yang-Baxter integrable system (\ref{newham}) that is sufficiently general to expedite such a self-referential construction. 

In an attempt to resolve the paradoxes, consideration can be given to changing relevant axioms and definitions. For example, a simple resolution is obtained by declaring that all self-adjoint operators on finite-dimensional spaces are exactly-solvable. It is not an unreasonable position to take, because the spectrum of each operator is determined by its characteristic polynomial. But it does not help to understand why some systems are amenable to a Bethe Ansatz solution and others are not. Another route might be to revise Definition \ref{ci}. Rather than  define complete integrability in terms of the existence of conserved operators, it could be weakened to only include cases where the explicit form of the conserved operators is known. This feature does not currently apply to (\ref{newham}) when $A$ is not exactly-solvable. Expressing the conserved operators of Corollary \ref{cons}  in terms of the generating set $\overline{\B}$ that defines (\ref{newham}) leads to expressions with explicit dependence on the unknown quantities $X_{q(k,\nu)}$ and 
$\varepsilon_j$. However the drawback here is that this system, now classified as non-integrable, has exactly the same energy-level statistics as an integrable system.  Integrable systems are generally expected to display a signature Poisson distribution for the energy gaps, as has been observed in bosonic systems \cite{dlmz} analogous to those discussed above. By defining integrability in a basis-dependent manner, it would subsequently render all studies relating integrability to energy-level statistics as meaningless. 

The above results were formulated around a solution of the classical Yang-Baxter equation (\ref{cybe}). This is not of a standard form, with the corresponding solution (\ref{rmat}) not possessing the skew-symmetry property $r_{12}(u,v)=-r_{21}(v,u)$. The construction of the transfer matrix from this solution can be understood as the quasi-classical limit of a double-row transfer matrix built around reflection equations \cite{lil,skrypnyk07}, rather than the conventional Yang-Baxter equation. This observation suggests that it may be possible to lift the construction presented here into that double-row transfer matrix setting, to extend the paradox to a wider class of Yang-Baxter integrable systems.

\ack{This work was funded by the Australian Research Council through Discovery
Project DP200101339. The author acknowledges the traditional
owners of the land on which The University of Queensland operates, the Turrbal and Jagera people.}

\section*{Appendix}

Here it is shown that Eq. (\ref{commute}) holds, viz. 
\begin{eqnarray*}
\left[\T_j,\,\T_k\right]=0, \qquad\qquad 1\leq j, k\leq L,
\end{eqnarray*}
by direct use of the commutation relations (\ref{cr}). Recall that 
\begin{eqnarray*}
\T_j=2(S_j^z)^2+2\alpha S_j^z+\beta \E_j (S_j^++S_j^-)+\sum_{k\neq j}^L \theta_{jk}, \\ 
\theta_{jk}=\frac{4\E^2_j}{\E_j^2-\E_k^2}S_j^zS_k^z+\frac{2\E_j\E_k}{\E_j^2-\E_k^2}(S_j^+S_k^-+S_j^-S_k^+),
\qquad j\neq k.
\end{eqnarray*} 
First note that for $j\neq p$
\begin{eqnarray*}
[\theta_{jk},\,\theta_{pk}]
  &=  \frac{8\E^2_j}{\E_j^2-\E_k^2} \frac{\E_p\E_k}{\E_p^2-\E_k^2} S_j^z ( S_k^+ S_p^-- S_k^-S_p^+)  \\
&\qquad +\frac{8\E^2_p}{\E_p^2-\E_k^2}\frac{\E_j\E_k}{\E_j^2-\E_k^2}  S_p^z (S_j^+S_k^- - S_j^-S_k^+)  \\
&\qquad + \frac{8\E^2_k}{\E_j^2-\E_k^2}  \frac{\E_j\E_p}{\E_p^2-\E_k^2}  S_k^z (S_p^+S_j^- - S_p^- S_j^+  ), 
\end{eqnarray*}
yielding 
\begin{eqnarray*}
[\theta_{pk}, \,\theta_{jk}] &+[\theta_{kj}, \,\theta_{pj}] + [\theta_{jp}, \,\theta_{kp}] =0.
\end{eqnarray*}
Now from 
\begin{eqnarray*}
&\theta_{jk}+\theta_{kj}= 4 S_j^z S_k^z, \\
&[S_k^z S_p^z,\, \theta_{jk}]= 
\frac{2\E_j\E_k}{\E_j^2-\E_k^2}  S_p^z (S_k^+ S_j^- - S_k^- S_j^+) 
=  [S_{j}^z S_p^z, \,\theta_{kj}],   \qquad j\neq p,
\end{eqnarray*}
it follows that
%\begin{eqnarray*}
%[\theta_{jk},\,\theta_{kj}]=\frac{8\E_j \E_k}{\E_k^2-\E_j^2}  \left( S_j^z ( S_k^+ S_j^- - S_k^- S_j^+)  
%+   ( S_j^+ S_k^- - S_j^- S_k^+)  S_k^z   \right) 
%\end{eqnarray*}
%and 
%
\begin{eqnarray*} [\theta_{jk},\,\theta_{kp}]  + [\theta_{jp},\,\theta_{kj} ]  + [\theta_{jp},\,\theta_{kp}]  
&= [\theta_{pk},\,\theta_{jk}]  + [\theta_{kj},\,\theta_{pj} ]  + [\theta_{jp},\,\theta_{kp}]  \\
&=0
\end{eqnarray*}
%and using 
%\begin{eqnarray*}
%& [ S_j^z S_k^z ,\, \theta_{kj} ]=\frac{2\E_j \E_k}{\E_k^2-\E_j^2}  \left( S_j^z ( S_k^+ S_j^- - S_k^- S_j^+)  
%+   ( S_j^+ S_k^- - S_j^- S_k^+)  S_k^z   \right) 
%\end{eqnarray*}
leading to
\begin{eqnarray*}
\sum_{p\neq j}^L\sum_{l\neq k}^L [\theta_{jp},\,\theta_{kl}] 
&=[\theta_{jk},\,\theta_{kj}]
+\sum_{p\neq j,k}^L[\theta_{jp},\,\theta_{kp}]
+\sum_{p\neq j,k}^L[\theta_{jp},\,\theta_{kj}]
+\sum_{p\neq j,k}^L[\theta_{jk},\,\theta_{kp}]\\
%&=[\theta_{jk},\,\theta_{kj}] \\
&= 4[ S_j^z S_k^z ,\, \theta_{kj} ]  \\
&=\frac{8\E_j \E_k}{\E_k^2-\E_j^2}  \left( S_j^z ( S_k^+ S_j^- - S_k^- S_j^+)  
+   ( S_j^+ S_k^- - S_j^- S_k^+)  S_k^z   \right)   .
\end{eqnarray*}
Also,
\begin{eqnarray*}
[ S^z_j , \, \theta_{kj}] &=   \frac{2 \E_k \E_j}{\E_k^2-\E_j^2}(S_j^+S_k^- - S_j^-S_k^+  )  \\
&= [ S^z_k , \, \theta_{jk}]  
\end{eqnarray*}
and
\begin{eqnarray*}
&\E_j[S_j^+ +S_j^-   , \, \theta_{kj}   ] &=\E_j \left( \frac{4\E^2_k}{\E_j^2-\E_k^2}
S_k^z(S_j^+ - S_j^-   )+\frac{4\E_k\E_j}{\E_k^2-\E_j^2}S_j^z(S_k^+ - S_k^-) \right) \\
&&=\E_k[S_k^+ +S_k^-   , \, \theta_{jk}   ]   
\end{eqnarray*}

Then, for $j\neq k$,
\begin{eqnarray*}
\left[\T_j,\,\T_k\right]
&=\sum_{l\neq k}^L[2(S_j^z)^2+2\alpha S_j^z+\beta \E_j (S_j^++S_j^-),\, \theta_{kl}] \\
&\qquad+\sum_{p\neq j}^L[\theta_{jp},\, 2(S_k^z)^2+2\alpha S_k^z+\beta \E_k (S_k^++S_k^-)] 
+\sum_{p\neq j}^L\sum_{l\neq k}^L [\theta_{jp},\,\theta_{kl}] \\
%&=2S_j^z[S_j^z,\,\theta_{kj}]+2[S_j^z,\,\theta_{kj}]S_j^z 
%+\frac{4\alpha \E_k \E_j}{\E_k^2-\E_j^2}(S_j^+S_k^- - S_j^-S_k^+  ) \\
%&\qquad +\beta \E_j \left( \frac{4\E^2_k}{\E_k^2-\E_j^2}
%(-S_k^zS_j^+ + S_k^z S_j^-   )+\frac{4\E_k\E_j}{\E_k^2-\E_j^2}(S_k^+S_j^z-S_k^-S_j^z) \right)  \\
%&\qquad + 2S_k^z[\theta_{jk},\,S_k^z]+2[\theta_{jk},\,S_k^z]S_k^z 
%-\frac{4\alpha \E_j \E_k}{\E_j^2-\E_k^2}(S_k^+ S_j^- -S_k^- S_j^+) \\
%&\qquad -\beta \E_k \left( \frac{4\E^2_j}{\E_j^2-\E_k^2}
%(-S_j^zS_k^+ + S_j^z S_k^-   )+\frac{4\E_j\E_k}{\E_j^2-\E_k^2}(S_j^+S_k^z-S_j^-S_k^z) \right)  \\
%&\qquad + \frac{8\E_j \E_k}{\E_k^2-\E_j^2}  \left( S_j^z ( S_k^+ S_j^- - S_k^- S_j^+)  +   ( S_j^+ S_k^- - %S_j^- S_k^+)  S_k^z   \right)  \\ 
&=2S_j^z[S_j^z,\,\theta_{kj}]+2[S_j^z,\,\theta_{kj}]S_j^z  + 2S_k^z[\theta_{jk},\,S_k^z]+2[\theta_{jk},\,S_k^z]S_k^z  \\
&\qquad + \frac{8\E_j \E_k}{\E_k^2-\E_j^2}  \left( S_j^z ( S_k^+ S_j^- - S_k^- S_j^+)  +   ( S_j^+ S_k^- - S_j^- S_k^+)  S_k^z   \right)  \\
&=2(S_j^z-S_k^z)[S_j^z,\,\theta_{kj}]+2[S_j^z,\,\theta_{kj}](S_j^z-S_k^z)    \\
&\qquad + \frac{8\E_j \E_k}{\E_k^2-\E_j^2}  \left( S_j^z ( S_k^+ S_j^- - S_k^- S_j^+)  +   ( S_j^+ S_k^- - S_j^- S_k^+)  S_k^z   \right)  \\
&=\frac{4\E_k\E_j}{\E_k^2-\E_j^2}  \left((S_j^z-S_k^z) ( S_j^+S_k^--S_j^-S_k^+  )  
+ (  S_j^+S_k^--S_j^-S_k^+)  (S_j^z-S_k^z) \right)  \\
%&\qquad + \frac{4\E_j \E_k}{\E_j^2-\E_k^2}  \left(-S_k^z ( S_k^+S_j^--S_k^-S_j^+  )  - (  S_k^+S_j^--S_k^-S_j^+)%  S_k^z \right)     \\
&\qquad + \frac{8\E_j \E_k}{\E_k^2-\E_j^2}  \left( S_j^z ( S_k^+ S_j^- - S_k^- S_j^+)  +   ( S_j^+ S_k^- - S_j^- S_k^+)  S_k^z   \right)  \\
%&=\frac{4\E_k\E_j}{\E_k^2-\E_j^2}  \left( S_j^+S_k^--S_j^-S_k^+\right)  S_j^z  - \frac{4\E_j \E_k}{\E_j^2-\E_k^2}%  S_k^z \left( S_k^+S_j^--S_k^-S_j^+   \right)     \\
%&\qquad + \frac{4\E_j \E_k}{\E_k^2-\E_j^2}  \left( S_j^z ( S_k^+ S_j^- - S_k^- S_j^+)  +   ( S_j^+ S_k^- - S_j^- %S_k^+)  S_k^z   \right)   \\
&=\frac{4\E_k\E_j}{\E_k^2-\E_j^2} [ \left( S_j^+S_k^--S_j^-S_k^+\right), \, (S_j^z+S_k^z) ]  \\
&=0.
\end{eqnarray*}

\section*{References}

\end{document}